\newtheorem{Proposition}{Proposition}
    \newcommand{\qx}{{\bf x}}
    \newcommand{\qy}{{\bf y}}
    \newcommand{\qz}{{\bf z}}
    \newcommand{\qA}{{\bf A}}
    \newcommand{\qJ}{{\bf J}}
    \newcommand{\bbC}{{\mathbb C}}
    \newcommand{\calF}{{\mathcal F}}
    \newcommand{\calN}{{\mathcal N}}
    \newcommand{\Ex}{{\sf E}}
    \newcommand{\Extr}{\operatornamewithlimits{\sf Extr}}
    \newcommand{\mmse}{{\sf mmse}}
    \newcommand{\rmd}{{\rm d}}
    \newcommand{\rmD}{{\rm D}}
\begin{document}

\title{Analysis of Compressed Sensing\\with Spatially-Coupled Orthogonal Matrices}

\author{Chao-Kai Wen
        and~Kai-Kit Wong

\thanks{C.-K. Wen is with the Institute of Communications Engineering, National Sun Yat-sen University, Kaohsiung, Taiwan. E-mail: {\sf ckwen@ieee.org}.}% <-this % stops a space
\thanks{K.-K. Wong is with the Department of Electronic and Electrical Engineering, University College London, London, WC1E 7JE, United Kingdom. E-mail: {\sf kai-kit.wong@ucl.ac.uk}.}
}

% make the title area
\maketitle

\begin{abstract}
Recent development in compressed sensing (CS) has revealed that the use of a special design of measurement matrix, namely the spatially-coupled matrix, can achieve the information-theoretic
limit of CS. In this paper, we consider the measurement matrix which consists of the spatially-coupled \emph{orthogonal} matrices. One example of such matrices are the randomly selected discrete
Fourier transform (DFT) matrices. Such selection enjoys a less memory complexity and a faster multiplication procedure. Our contributions are the replica calculations to find the
mean-square-error (MSE) of the Bayes-optimal reconstruction for such setup. We illustrate that the reconstruction thresholds under the spatially-coupled orthogonal and Gaussian ensembles are
quite different especially in the noisy cases. In particular, the spatially coupled orthogonal matrices achieve the faster convergence rate, the lower measurement rate, and the reduced MSE.
\end{abstract}

\section*{\sc I. Introduction}
Compressed sensing (CS) is a signal processing technique that aims to reconstruct a sparse signal with a higher dimension $(N)$ space from an underdetermined lower dimension $(M)$ measurement
space, with the measurement ratio $\alpha = M/N$ as small as possible. In the literature, the $\ell_{1}$-norm minimization is the most widely used scheme in signal reconstruction because the
$\ell_{1}$-norm minimization is convex and hence can be solved very efficiently \cite{Candes-05IT,Donoho-06IT,Candes-06IT}. However, the measurement ratio of the $\ell_{1}$-reconstruction for a
perfect reconstruction is required to be sufficiently larger than the information-theoretic limit \cite{Donoho-09JAMS,Kabashima-09JSM}.

If the probabilistic properties of the signal are known, then the probabilistic Bayesian inference offers the optimal reconstruction in the minimum mean-square-error (MSE) sense, but the optimal
Bayes estimation is not computationally tractable. By using belief propagation (BP), an efficient and less complex alternative, referred to as approximate message passing (AMP)
\cite{Donoho-09PNAS,Rangan-10ArXiv,Krzakala-12JSM}, has recently emerged. A remarkable result by Krzakala {\em et al.}~\cite{Krzakala-12PRX,Krzakala-12JSM} showed that a sparse signal can be
recovered up to its information theoretical limit if the measurement matrix has some special structure, namely spatially-coupled.\footnote{The idea of spatial coupling was first introduced in
the CS literature by \cite{Kudekar-10Allerton}, where some limited improvement in performance was observed.} Roughly speaking, spatially-coupled matrices are random matrices with a band diagonal
structure as shown in Fig.~\ref{fig:SpatiallyCoupledMatrix}. The authors of \cite{Krzakala-12PRX,Krzakala-12JSM} support this claim using an insightful statistical physics argument. This claim
has been proven rigorously by \cite{Donoho-12ISIT}.

Though AMP is less complex than the Bayes-optimal approach, the implementation of AMP will become prohibitively complex if the size of the signal is very large. This is not only because AMP
still requires many matrix multiplications up to order of $O(MN)$ but also because it requires many memory to store the measurement matrix. It is therefore of great interest to consider some
special measurement matrix permitting faster multiplication procedure and less memory complexity. Randomly selected discrete Fourier transform (DFT) matrices are one such example
\cite{Barbier-13ArXiv,Javanmard-12ISIT}. Using DFT as the measurement matrix, fast Fourier transform (FFT) can be used to perform the matrix multiplications down to the order of $O(N\log_2N)$
and the measurement matrix is not required to be stored. However, in contrast to the case with random matrices with independent entries, there are only a few studies on the performance of CS for
matrices with row-orthogonal ensemble \cite{Kabashima-09JSM,Javanmard-12ISIT,Tulino-13IT,Kabashima-12JSM,Kabashima-14ArXiv,Barbier-13ArXiv,Vehkapera-arXiv13}.

Analysis in \cite{Kabashima-09JSM} revealed that the $\ell_{1}$-reconstruction thresholds are the same under all measurement matrices that are sampled from the rotationally invariant matrix
ensembles. In addition, along the line of $\ell_{1}$-reconstruction, the authors in \cite{Kabashima-12JSM} showed that the gain in performance using concatenation of random orthogonal matrices
is only specific to signal with non-uniform sparsity pattern. In a different context, \cite{Tulino-13IT} also showed that a general class of free random matrices incur no loss in the noise
sensitivity threshold if optimal decoding is adopted. The empirical study in \cite{Donoho-09PNAS} illustrated that the reconstruction ability of AMP is universal with respect to different matrix
ensembles. Furthermore, by empirical studies, \cite{Barbier-13ArXiv} found that using DFT matrices does alter the AMP performance but will not affect the final performance significantly. With
these studies, one might conclude that the reconstruction ability would be nearly universal with respect to different measurement matrix ensembles. However, counter evidences appear recently
when the measurement is corrupted by additive noise \cite{Vehkapera-arXiv13,Kabashima-14ArXiv}. They argued the superiority of the row-orthogonal ensembles over independent Gaussian ensembles in
noisy setting.\footnote{In fact, the significance of orthogonal matrices under other problems (e.g., CDMA and MIMO) was pointed out in \cite{Takeda-06EPL,Hatabu-09PRE}.} Therefore, it is not
fully understood how the measurement matrix with row-orthogonal ensemble affects the CS performance.

In this paper, our aim is to fill this gap by investigating the MSE in the optimal Bayes inference of the sparse signals if the measurement matrix consists of the spatially-coupled orthogonal
matrices. In particular, by using the replica method, we get the state evolution of the MSE for CS with spatially-coupled orthogonal matrices. Based on the derived state evolution, we are able
to observe closer behaviors regarding the CS with orthogonal matrices. Several interesting observations will be made via the statistical physics argument in \cite{Krzakala-12JSM}.

As a summary, our finding is that the reconstruction thresholds under the row orthogonal and i.i.d.~Gaussian ensembles are quite different especially in noisy scenarios. The construction
thresholds seem universal only in a very low noise variance regime. In the higher noisy variance, the reconstruction thresholds of the row-orthogonal ensemble are significantly lower than those
of the i.i.d.~Gaussian ensemble. In addition, we notice that the case with the spatially coupled orthogonal matrices enjoys 1) the faster convergence rate, 2) the lower measurement rate, and 3)
the lower MSE result.

\section*{\sc II. Problem Formulation}

\begin{figure}
\begin{center}
\resizebox{2.75in}{!}{%
\includegraphics*{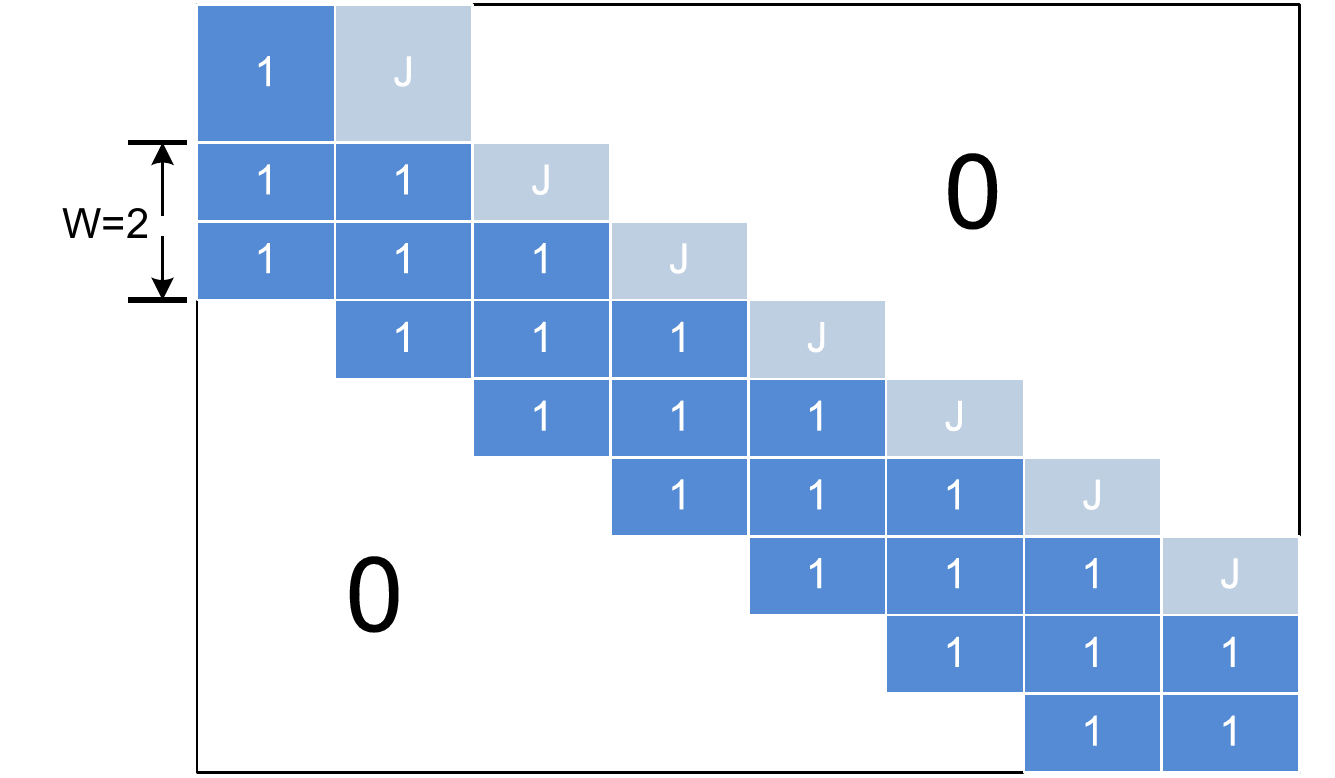} }%
\caption{An example of spatially-coupled matrix introduced in \cite{Krzakala-12JSM,Barbier-13ArXiv}, where $L_c=L_r = 8$. Each block is obtained by randomly selecting and re-ordering from the DFT matrix. The upper diagonal blocks have components with variance $J$. For readers' convenience, the similar notations to those in \cite{Krzakala-12JSM} are used to demonstrate the spatially-coupled matrix.}\label{fig:SpatiallyCoupledMatrix}
\end{center}
\end{figure}

We consider the noisy CS problem
\begin{equation}\label{eq:sysModel}
    \qy= \qA\qx + \sigma\qz,
\end{equation}
where $\qy \in \bbC^{M}$ is a measurement vector, $\qA \in \bbC^{M \times N}$ denotes a known measurement matrix, $\qx \in \bbC^{N}$ is a signal vector, $\qz \in\bbC^{M}$ is the standard
Gaussian noise vector, and $\sigma$ represents the noise magnitude. We denote by $\alpha = M/N$ the measurement ratio (i.e., the number of measurements per variable).

The spatially-coupled matrix $\qA$ used in this paper follows that in \cite{Krzakala-12JSM}, see Fig.~\ref{fig:SpatiallyCoupledMatrix}. The $N$ components of the signal vector $\qx$ are split
into $L_c$ blocks of $N_p$ variables for $p=1,\dots,L_c$. We denote $\gamma_p = N_p/N$. Next, we split the $M$ components of the measurements $\qy$ into $L_r$ blocks of $M_q$ measurements, for
$q=1,\dots,L_r$. As a result, $\qA$ is composed of $L_r\times L_c$ blocks. Each block $\qA_{q,p} \in \bbC^{M_q \times N_p}$ is obtained by randomly selecting and re-ordering from the standard
DFT matrix multiplied by $\sqrt{J_{q,p} \gamma_p}$.\footnote{The standard DFT matrix has been normalized by $1/\sqrt{N_p}$. However, we assume that the components of $\qA_{q,p}$ have variance
$J_{q,p}/N$. To this end, the factor $\gamma_p = N_p/N$ is used to adjust the normalization in each block.} The measurement ratio of $(q,p)$-group is $\alpha_{q,p} = M_q/N_p$. We have an $L_r
\times L_c$ coupling matrix $\qJ \triangleq [J_{q,p}]$.

CS aims to reconstruct $\qx$ from $\qy$. We suppose that each entry of $\qx$ is generated from a distribution $P(x)$ independently. In particular, the signals are sparse where the fraction of
non-zero entries is $\rho$ and their distribution is $ g$. That is,
\begin{equation}\label{eq:px}
    P(\qx) = \prod_{n=1}^{N} P(x_n) = \prod_{n=1}^{N} \Big( (1-\rho)\delta(x_n) + \rho g(x_n) \Big).
\end{equation}
The Bayes optimal way of estimating $\qx$ that minimizes the MSE, defined as $E \triangleq \sum_{n=1}^{N} |\hat{x}_n - x_n|/N$, is given by \cite{Poor-94BOOK}
\begin{equation} \label{eq:estx}
    \hat{\qx} \triangleq \Ex\{\qx|\qy\} = \int \qx p(\qx|\qy) \rmd\qx ,
\end{equation}
where $p(\qx|\qy)$ is the posterior probability of $\qx$ given observation of $\qy$. Following Bayes theorem, we have
\begin{equation}
    p(\qx|\qy) = \frac{p(\qy|\qx) p(\qx)}{p(\qy)},
\end{equation}
where the conditional distribution of $\qx$ given $\qy$ in (\ref{eq:sysModel}) is
\begin{equation}
    p(\qy|\qx) = \frac{1}{(\pi \sigma^2)^M} e^{-\frac{1}{\sigma^2} \| \qy - \qA\qx \|^2}.
\end{equation}
Our aim is to study the MSE in the optimal Bayes inference.

\section*{\sc III. Analytical Result}
Before proceeding, it is useful to understand the posterior mean estimator (\ref{eq:estx}) by revisiting a scalar single measurement
\begin{equation} \label{eq:ScalarSysModel}
    y= x +  \varsigma^{-\frac{1}{2}}  z.
\end{equation}
This is a special case of (\ref{eq:sysModel}) with $M=N=1$. According to (\ref{eq:estx}), MMSE is achieved by the conditional expectation
\begin{equation} \label{eq:ScalarEstX}
    \Ex\{x|y\} = \int  x p(x|y) \rmd x,
\end{equation}
where $p(x|y) = p(y|x) p(x)/p(y)$ and $p(y|x) =\frac{\varsigma}{\pi} e^{-\varsigma |y-x|^2}$. Note that $\hat{x}(y)$ changes with $y$ while we will suppress $y$ for brevity. Finally, we define
$\mmse(\cdot)$ of this setting as
\begin{equation} \label{eq:defMMSE}
    \mmse(\varsigma) \triangleq \Ex\left\{ \left|x-\Ex\{x|y\}\right|^2 \right\},
\end{equation}
in which the expectation is taken over the joint conditional distribution $p(y,x) = p(y|x) p(x)$.

Explicit expressions of $\mmse$ are available for some special signal distributions. For example, if the signal distribution $p(x)$ follows the Bernoulli-Gaussian (BG) density, i.e., $ g$ is the
standard complex Gaussian distribution, then we have
\begin{equation} \label{eq:ScalarEstX2}
\Ex\{x|y\} = \frac{\rho\calN(y;1+\varsigma^{-1})}{\rho\calN(y;1+\varsigma^{-1})+(1-\rho)\calN(y;\varsigma^{-1})}\frac{y}{1+\varsigma^{-1}},
\end{equation}
where $\calN(y;c)$ denotes a Gaussian probability density function (pdf) with zero mean and variance $c$, i.e., $\calN(y;c) \triangleq 1/(\pi c)e^{-|y|^2/c}$. Then we obtain explicitly
\begin{equation} \label{eq:defMMSE2}
    \mmse(\varsigma) = \rho - \frac{\rho^2\varsigma}{\varsigma+1} \int \rmD z \frac{|z|^2}{\rho+(1-\rho)e^{-|z|^2\varsigma}(\varsigma+1)},
\end{equation}
where $\rmD z \triangleq \frac{\rmd \Re{z} \rmd \Im{z} }{\pi} e^{-|z|^2}$ with $\Re{z}$ and $\Im{z}$ being the real and imaginary parts of $z$, respectively.

Although the analytical result of $\mmse$ of the scalar measurement is available, the task of obtaining the corresponding result to the vector case (\ref{eq:sysModel}) might appear daunting.
Surprisingly, tools from statistical mechanics enable such development in large system limits. The key for finding the statistical properties of (\ref{eq:estx}) is through the average free
entropy \cite{Krzakala-12JSM}
\begin{equation}\label{eq:FreeEn}
    \calF \triangleq \frac{1}{N}\Ex_{\qy,\qA}\left[\log Z(\qy,\qA)\right],
\end{equation}
where
\begin{equation} \label{eq:ZyH}
Z(\qy,\qA)\triangleq\Ex_{\qx}\left[e^{-\frac{1}{ {\sigma}^2}\left\|\qy-  \qA \qx \right\|^2}\right]
\end{equation}
is the partition function. The similar approach also has been used under different settings, e.g., \cite{Krzakala-12JSM,Tulino-13IT,Kabashima-12JSM,Vehkapera-arXiv13}. The analysis of
(\ref{eq:FreeEn}) is still difficult. The major difficulty in (\ref{eq:FreeEn}) lies in the expectations over $\qy$ and $\qA$. We can, nevertheless, greatly facilitate the mathematical
derivation by rewriting $\calF$ as
\begin{equation}\label{eq:LimF}
\calF = \frac{1}{N} \lim_{r\rightarrow 0}\frac{\partial}{\partial r}\log\Ex_{\qy,\qA}\left[Z^r(\qy,\qA)\right],
\end{equation}
in which we have moved the expectation operator inside the log-function. We first evaluate $\Ex_{\qy,\qA}\left[Z^r(\qy,\qA)\right]$ for an integer-valued $r$, and then generalize it for any
positive real number $r$. This technique is called the replica method \cite{Edwards-75JPF}, and has been widely adopted in the field of statistical physics \cite{Nishimori-01BOOK}.

In the analysis, we use the assumptions that $N_p \rightarrow \infty$, for all $p=1,\ldots,L_c$, and $M_q \rightarrow \infty$, for all $q=1,\ldots,L_r$, while keeping $M_q/N_p = \alpha_{q,p}$
fixed and finite. For convenience, we refer to this large dimensional regime simply as $N \rightarrow \infty$.

Under the assumption of replica symmetry, the following results are obtained.

\begin{Proposition} \label{Pro1}
As $N \rightarrow \infty$, the free entropy is given by
\begin{multline}\label{eq:GenFree}
\calF(\{\varsigma_{q,p},\varepsilon_{q,p}\}) = \sum_{p=1}^{L_c}\gamma_{p} \Ex_{y_p}\left\{ \log \Ex_{x_p}\left\{ e^{-\varsigma_{p}| y_{p} - x_{p} |^2} \right\} \right\} \\
  + \sum_{q=1}^{L_r}\sum_{p=1}^{L_c}\gamma_{p}\varepsilon_{q,p}\varsigma_{q,p}
  + \sum_{q=1}^{L_r} G(\{\varepsilon_{q,p}\}) + (1-\alpha),
\end{multline}
where $\varsigma_{p} \triangleq \sum_{q=1}^{L_r} \varsigma_{q,p}$, the outer expectation $\Ex_{y_p}\{ \cdot \}$ is taken over the joint conditional distribution
\begin{equation}
    p(y_p,x'_p) = \frac{\varsigma}{\pi} e^{-\varsigma|y_p - x'_p|^2} p(x'_p),
\end{equation}
and
\begin{multline}
  G(\{\varepsilon_{q,p}\})
  \triangleq  \Extr_{\{\Lambda_{q,p}\}} \Bigg\{ - \alpha_{q,p}\gamma_p \log\left(1 + \sum_{p=1}^{L_c} \frac{\gamma_p J_{q,p}}{\sigma^2\Lambda_{q,p}} \right)
  \\ + \sum_{p=1}^{L_c} \gamma_{p} (\Lambda_{q,p} \varepsilon_{q,p} - \log\Lambda_{q,p} \varepsilon_{q,p}-1)  \Bigg\},
\end{multline}
where $\Extr_{X}\{\cdots\}$ denotes the extremization with respect to $X$. The quantities $\{\varsigma_{q,p},\varepsilon_{q,p}\}$ are chosen to maximize (\ref{eq:GenFree}).
\end{Proposition}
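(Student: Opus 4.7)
My plan is to start from (\ref{eq:LimF}), compute $\Ex_{\qy,\qA}[Z^{r}(\qy,\qA)]$ for integer $r\geq 1$, and analytically continue to $r\to 0^{+}$. Inserting $r$ independent replica vectors $\qx^{(1)},\ldots,\qx^{(r)}$ drawn from the prior (\ref{eq:px}), and treating the true signal as $\qx^{(0)}$, the noise $\qz$ enters Gaussianly, so integrating out $\qz$ (equivalently $\qy$) yields a quadratic form in $\{\qA(\qx^{(a)}-\qx^{(0)})\}_{a=1}^{r}$. After this step the whole dependence on $\qA$ is carried by the $r+1$ projected vectors $\{\qA\qx^{(a)}\}_{a=0}^{r}$.

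\textbf{Overlap order parameters and the orthogonal-ensemble average.} For each column block $p=1,\ldots,L_{c}$ I would introduce the overlap matrix $Q_{p}^{ab}=N_{p}^{-1}\sum_{n\in p}(x_{n}^{(a)})^{*}x_{n}^{(b)}$ for $a,b\in\{0,1,\ldots,r\}$ together with its conjugate $\hat Q_{p}^{ab}$ via Fourier delta functions. This decouples the $\qA$-average from the replicated prior and, because the blocks $\qA_{q,p}$ are independently drawn, factorizes across $(q,p)$. For each block the singular spectrum is a single mass at $\sqrt{J_{q,p}\gamma_{p}}$, so the ensemble integral over the row-orthogonal (selected-DFT) ensemble is a degenerate spherical / Harish-Chandra--Itzykson--Zuber integral whose large-$N$ asymptotics, in the spirit of \cite{Takeda-06EPL,Hatabu-09PRE,Kabashima-12JSM,Tulino-13IT,Vehkapera-arXiv13}, gives a log-determinant contribution. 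Under the replica-symmetric ansatz, in which the overlaps take only three values per block (true--true, replica--replica on the diagonal, and off-diagonal), this log-det collapses to $-\alpha_{q,p}\gamma_{p}\log\bigl(1+\sum_{p}\gamma_{p}J_{q,p}/(\sigma^{2}\Lambda_{q,p})\bigr)$ after a Legendre duality that introduces the auxiliary $\Lambda_{q,p}$ extremized inside $G$. The quadratic penalty from the delta-function representations plus the Jacobian of this Legendre step supply $\sum_{p}\gamma_{p}(\Lambda_{q,p}\varepsilon_{q,p}-\log\Lambda_{q,p}\varepsilon_{q,p}-1)$, completing $G(\{\varepsilon_{q,p}\})$. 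I expect this to be the main obstacle: it both relies on the non-Gaussian spherical-integral asymptotics (the essential new ingredient compared with the i.i.d.\ Gaussian case of \cite{Krzakala-12JSM}) and requires careful bookkeeping of the three overlap values across the $L_{r}\times L_{c}$ block grid.

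\textbf{Scalar channel reduction and $r\to 0$.} On the prior side, the coupling through $\hat Q_{p}^{ab}$ is rank-one in replica indices, so a Hubbard--Stratonovich linearization makes the integral factor over coordinates. Under replica symmetry each on-site factor coincides with the scalar Gaussian channel (\ref{eq:ScalarSysModel}) with precision $\varsigma_{p}=\sum_{q=1}^{L_{r}}\varsigma_{q,p}$, where $\varsigma_{q,p}$ is the replica-symmetric value of $\hat Q_{p}^{ab}$ attributed to row-block $q$. Differentiating in $r$ and sending $r\to 0$ peels off one replica and produces $\sum_{p}\gamma_{p}\Ex_{y_{p}}\log\Ex_{x_{p}}\exp(-\varsigma_{p}|y_{p}-x_{p}|^{2})$. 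Adding the cross-term $\sum_{q,p}\gamma_{p}\varepsilon_{q,p}\varsigma_{q,p}$ from the overlap--conjugate coupling, the $\sum_{q}G(\{\varepsilon_{q,p}\})$ from the orthogonal average, and the constant $(1-\alpha)$ from the Gaussian likelihood normalization yields exactly (\ref{eq:GenFree}). Stationarity of $\calF$ in the conjugate pairs $\{\varepsilon_{q,p},\varsigma_{q,p}\}$ is then precisely the extremization claimed in the proposition.
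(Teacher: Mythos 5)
Your outline follows exactly the route the paper itself invokes: the paper's entire proof is the one-line remark that the result follows from the replica techniques of \cite{Takeda-06EPL,Vehkapera-arXiv13} "after additional manipulations," and your sketch—replica trick on (\ref{eq:LimF}), block overlap order parameters, the degenerate spherical-integral asymptotics for the row-orthogonal ensemble yielding the log-determinant and the auxiliary $\Lambda_{q,p}$ via Legendre duality, replica symmetry, and the scalar-channel reduction with $\varsigma_p=\sum_q\varsigma_{q,p}$—is precisely that machinery, spelled out in more detail than the paper provides. No discrepancy to report.
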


\begin{proof}
The proposition can be obtained by applying the techniques in \cite{Takeda-06EPL,Vehkapera-arXiv13} after additional manipulations.
\end{proof}

\begin{Proposition} \label{Pro_StateEvaluation}
The asymptotic evolution of the MSE $\varepsilon_p$ in each block $p$ is given by
\begin{equation} \label{eq:evolutionMSE}
    \varepsilon_p^{(t)} = \mmse\left( \sum_{q=1}^{L_r} \varsigma_{q,p}^{(t-1)} \right)
\end{equation}
where
\begin{subequations} \label{eq:sdPoint2}
\begin{align}
\Lambda_{q,p}^{(t)} &= \frac{1}{\varepsilon_p^{(t)}} - \varsigma_{q,p}^{(t-1)}, \label{eq:fixPoiny_Lambda} \\
\Delta_{q,p}^{(t)} &\triangleq \frac{ \alpha_{q,p}\frac{\gamma_p J_{q,p} }{\Lambda_{q,p}^{(t)}}}{\sigma^2 + \sum_{l=1}^{L_c} \frac{\gamma_l J_{q,l}}{\Lambda_{q,l}^{(t)}} } . \\
\varsigma_{q,p}^{(t)} &= \frac{\Lambda_{q,p}^{(t)}\Delta_{q,p}^{(t)}}{1-\Delta_{q,p}^{(t)}}. \label{eq:fixPoiny_varsigma}
\end{align}
\end{subequations}
As $t \rightarrow \infty$ (i.e., in thermodynamic), $\{\varsigma_{q,p},\Lambda_{q,p},\Delta_{q,p}\}$ will converge to values which locally maximize the free entropy.
\end{Proposition}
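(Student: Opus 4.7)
The plan is to derive the recursion (\ref{eq:evolutionMSE})--(\ref{eq:sdPoint2}) as the stationarity conditions of the free entropy $\calF(\{\varsigma_{q,p},\varepsilon_{q,p}\})$ supplied by Proposition~\ref{Pro1}, and then interpret the resulting coupled fixed-point equations as a time-indexed iteration whose fixed points are exactly these saddle points of $\calF$.

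First I would compute $\partial\calF/\partial\varsigma_{q,p}=0$. Because $\varsigma_{q,p}$ enters $\calF$ only through $\varsigma_{p}=\sum_{q}\varsigma_{q,p}$ inside the first term and through the bilinear cross-term $\gamma_{p}\varepsilon_{q,p}\varsigma_{q,p}$, the I-MMSE identity
\[
\frac{\partial}{\partial\varsigma}\,\Ex_{y}\!\left\{\log\Ex_{x}\!\left\{e^{-\varsigma|y-x|^{2}}\right\}\right\}=-\mmse(\varsigma),
\]
applied under the planted conditional density appearing in Proposition~\ref{Pro1}, yields $\varepsilon_{q,p}=\mmse(\varsigma_{p})$. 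Because the right-hand side is independent of $q$, the saddle value of $\varepsilon_{q,p}$ collapses to a per-block quantity $\varepsilon_{p}$, which is exactly (\ref{eq:evolutionMSE}) once a time index is introduced.

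Next I would handle the block $G(\{\varepsilon_{q,p}\})$. The inner extremization $\partial/\partial\Lambda_{q,p}=0$ is a direct calculation on a rational-plus-logarithmic expression and delivers $\Lambda_{q,p}\varepsilon_{q,p}=1-\Delta_{q,p}$, with $\Delta_{q,p}$ as in (\ref{eq:sdPoint2}). Applying the envelope theorem to $\partial\calF/\partial\varepsilon_{q,p}=0$ then produces $\varsigma_{q,p}=1/\varepsilon_{q,p}-\Lambda_{q,p}$, which rearranges into (\ref{eq:fixPoiny_Lambda}); combining the two identities eliminates $\varepsilon_{q,p}$ and yields (\ref{eq:fixPoiny_varsigma}).

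To promote these simultaneous equations into the iterative scheme of Proposition~\ref{Pro_StateEvaluation}, I would schedule the updates in the natural causal order: given $\varsigma_{q,p}^{(t-1)}$, first close the $\mmse$-loop with $\varepsilon_{p}^{(t)}=\mmse(\sum_{q}\varsigma_{q,p}^{(t-1)})$, then compute $\Lambda_{q,p}^{(t)}$ via (\ref{eq:fixPoiny_Lambda}), then $\Delta_{q,p}^{(t)}$, and finally $\varsigma_{q,p}^{(t)}$ via (\ref{eq:fixPoiny_varsigma}). This Gauss--Seidel-style scheduling reproduces the AMP state evolution known from \cite{Krzakala-12JSM,Donoho-12ISIT} in the spatially-coupled setup, so that its fixed points coincide with the stationary points of $\calF$ by construction. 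The main obstacle is justifying the last sentence of the statement: that any limit point is a \emph{local maximum} of $\calF$, not merely an arbitrary saddle. The standard route invokes the concavity of $\log\Ex_{x}\{e^{-\varsigma|y-x|^{2}}\}$ in $\varsigma$ (hence monotonicity and continuity of $\mmse$) together with the exact solvability of the $(\Lambda,\varepsilon)$-subblock at each step, which together enforce a monotone trajectory $\calF(\cdot^{(t)})\le\calF(\cdot^{(t+1)})$; Bolzano--Weierstrass on the bounded iterates then delivers convergence to a local maximizer of $\calF$, as claimed.
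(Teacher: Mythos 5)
Your proposal is correct and follows essentially the same route as the paper, whose entire proof is the single sentence that the state evolution corresponds to the steepest ascent of the free entropy in Proposition~\ref{Pro1}; you simply carry out the stationarity conditions (in $\varsigma_{q,p}$, $\Lambda_{q,p}$, and $\varepsilon_{q,p}$) explicitly and verify that they reproduce (\ref{eq:evolutionMSE})--(\ref{eq:fixPoiny_varsigma}) under the stated update schedule. Your derivation is in fact considerably more detailed than the paper's, and the only soft spot --- the monotonicity argument for convergence to a local maximizer --- is asserted rather than proved in the paper as well.
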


\begin{proof}
The state evolution of the Bayes-optimum corresponds to the steepest ascent of the free entropy (\ref{eq:GenFree}).
\end{proof}

If the ensembles of $\{\qA_{q,p}\}$ are Gaussian, the free entropy shares the same form as (\ref{eq:GenFree}) while $G$ should be replaced by
\begin{equation}
  G_{\sf Gaussian}(\{\varepsilon_{q,p}\})
  \triangleq - \alpha_{q,p}\gamma_p \log\left(1 + \sum_{p=1}^{L_c} \frac{\gamma_p J_{q,p}\varepsilon_{q,p}}{\sigma^2} \right) .
\end{equation}
In that case, the evolution of the MSE $\varepsilon_p$ in each block $p$ also follows the same form as (\ref{eq:evolutionMSE}) while $\varsigma_{q,p}^{(t)}$ should be \cite{Krzakala-12JSM}
\begin{equation}
  \varsigma_{q,p}^{(t)} = \frac{ \alpha_{q,p} \gamma_p J_{q,p} }{\sigma^2 + \sum_{l=1}^{L_c} \gamma_l J_{q,l}\varepsilon_{q,p} }.
\end{equation}
It is evident that the evolution of the MSE for row-orthogonal matrices instead of random ones indeed alters the performance of the Bayes-optimal reconstruction.

\section*{\sc IV. Discussions}
To better understand the relation between the free entropy and the MSE using the Bayes-optimal reconstruction, let us first consider the simplest case without the spatially-coupled matrix, i.e.,
$L_c=L_r=1$ and $J_{1,1} = 1$. For notational convenience, we refer $\varsigma_{1,1}$ and $\varepsilon_{1,1}$ to $\varsigma$ and $\varepsilon$, respectively. In Fig.~\ref{fig:FreeEnergy}, we
plot the free entropy $\calF(\varepsilon)$ for a signal of density $\rho = 0.4$, variance of noise $\sigma^2 = 10^{-4}$, and different values of the measurement rate $\alpha$. In maximizing
$\calF(\varepsilon)$ with respect to $\varepsilon$, one obtains the Bayes-optimal MSE. In particular, the state evolution in Proposition \ref{Pro_StateEvaluation} performs a steepest ascent in
$\calF(\varepsilon)$ and gets trapped at one of the local maximas according to the initial value of $\varepsilon$. Nonetheless, the only initialization that is algorithmically possible, e.g.,
BP, is starting from a large value, e.g., $\varepsilon^{(0)} = \rho$ (or larger). This means that BP converges to the much higher MSE than the MSE corresponding to the global maximum of the free
entropy. We can observe a phenomenon similar to that describes in \cite{Krzakala-12JSM} for Gaussian i.i.d.~matrices. Following \cite{Krzakala-12JSM}, we define three phase transitions as
follows:
\begin{itemize}
\item $\alpha_{\sf d}$ is defined as the \emph{largest} $\alpha$ for which $\calF(\varepsilon)$ has two local maximas.
\item $\alpha_{\sf s}$ is defined as the \emph{smallest} $\alpha$ for which $\calF(\varepsilon)$ has two local maximas.
\item $\alpha_{\sf c}$ is defined as the value $\alpha$ for which the two maximas of $\calF(\varepsilon)$ are identical.
\end{itemize}

\begin{figure}
\begin{center}
\resizebox{3.25in}{!}{%
\includegraphics*{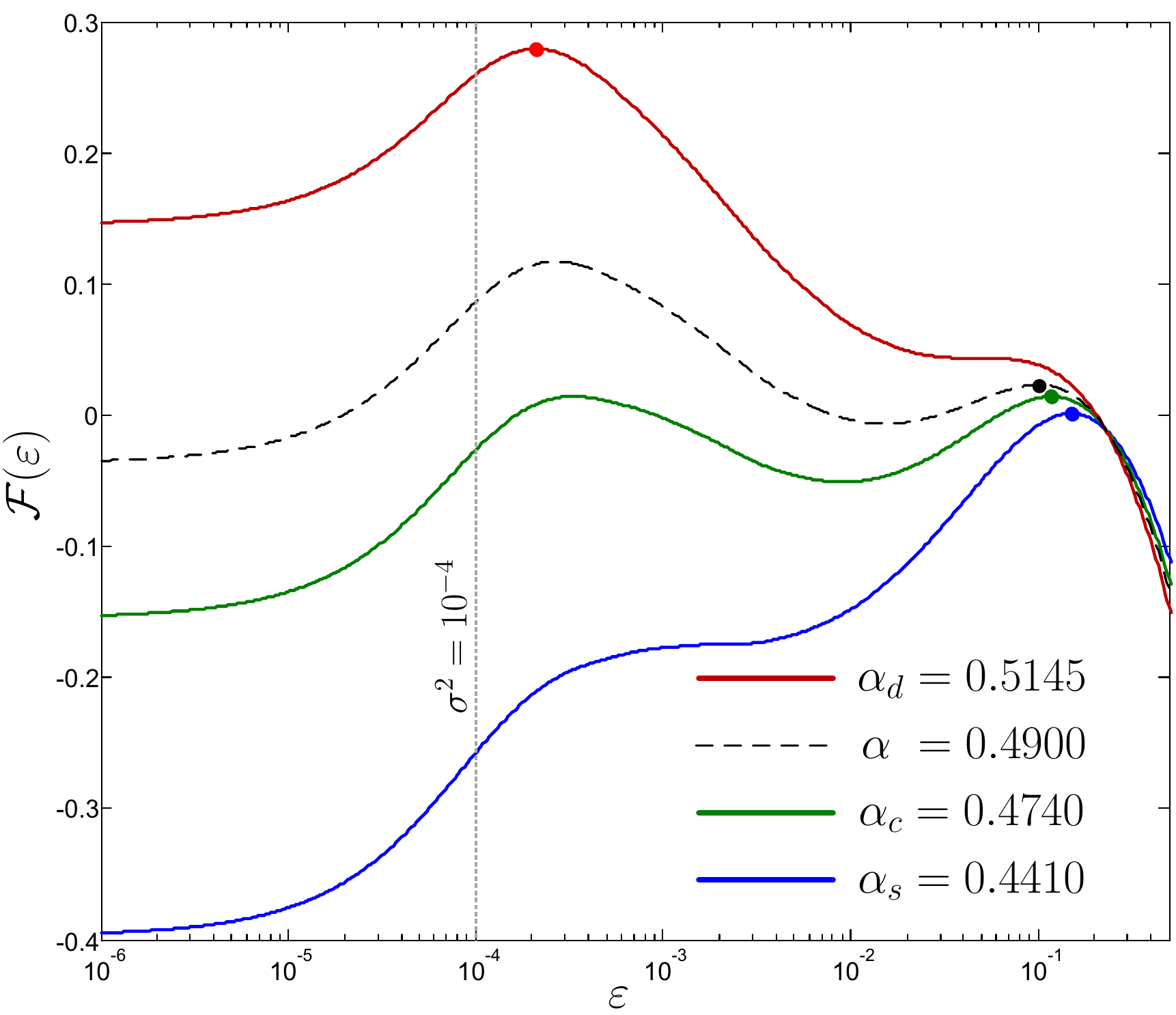} }%
\caption{The free entropy $\calF(\varepsilon)$ for $L_c=L_r=1$, measurement rate $\rho = 0.4$ and noise level $\sigma^2 = 10^{-4}$ under different value of density $\rho$. Marks correspond to the free entropy which in principle can be achieved by the BP algorithm.}\label{fig:FreeEnergy}
\end{center}
\end{figure}

From Fig.~\ref{fig:FreeEnergy}, we see that $\alpha_{\sf d}> \alpha_{\sf c} > \alpha_{\sf s}$. If $\alpha > \alpha_{\sf d}$, then the global maximum of $\calF(\varepsilon)$ is the only maximum
which is at the smaller value of MSE comparable to $\sigma^2$. This means that BP is able to reach the good MSE. However, when $\alpha_{\sf d} > \alpha > \alpha_{\sf c}$, the BP algorithm
converges to the right-most local maxima. The global maximum of $\calF(\varepsilon)$ is no longer reached by the BP algorithm although it corresponds to a small value of MSE. Hence, BP is
suboptimal in this region. Finally, if the sampling rate keeps decreasing, the global maximum of $\calF(\varepsilon)$ appears at the higher MSE and BP will surely converge to it.

As a consequence, there is a BP threshold $\alpha_{\sf d}$ below which the algorithm fails to achieve the good MSE. However, the remarkable result of \cite{Krzakala-12PRX} is that when the
concept of spatial coupling is used to design the measurement matrix $\qA$ then the BP algorithm reconstructs successfully down to $\alpha_{\sf c}$.

\begin{figure}
\begin{center}
\resizebox{3.5in}{!}{%
\includegraphics*{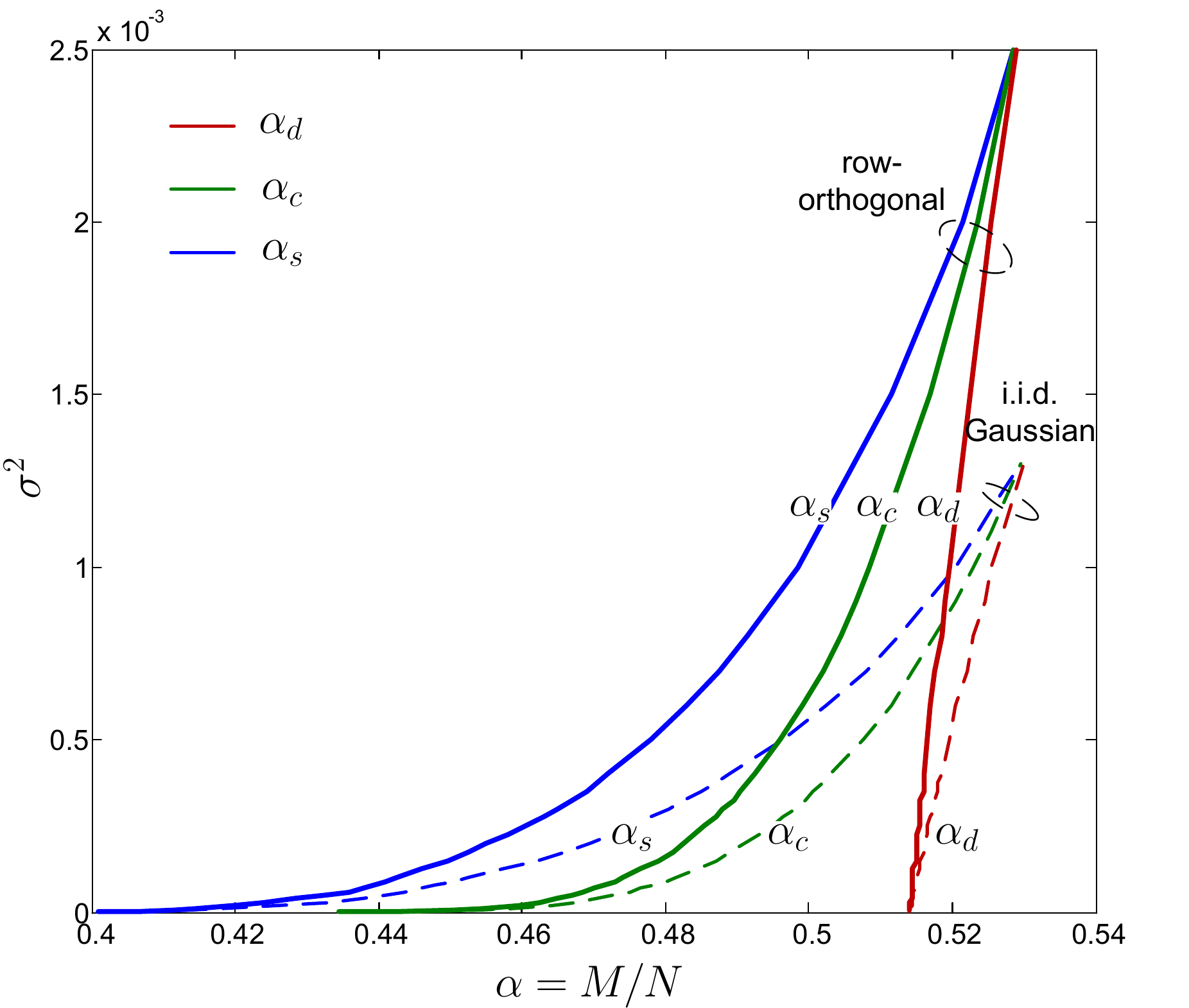} }%
\caption{The three phase transition lines for row-orthogonal and i.i.d.~Gaussian ensembles in noisy measurements when the signal density is $\rho = 0.4$.}\label{fig:Phase}
\end{center}
\end{figure}

From the above, we have seen three kinds of phase transition behavior as a function of $\alpha$ for a given $\sigma^2$. The three phase transitions are expected to depend upon $\sigma^2$. In
Fig.~\ref{fig:Phase}, we plot the dependence of $\alpha_{\sf d}$, $\alpha_{\sf c}$, and $\alpha_{\sf s}$ on the noise variance under row-orthogonal and i.i.d.~Gaussian ensembles. For the larger
noise variance, both of the ensembles appear to have no such sharp threshold among $\alpha_{\sf d}$, $\alpha_{\sf c}$, and $\alpha_{\sf s}$. We observe that in very low noise variance, the phase
transition lines between row-orthogonal and i.i.d.~Gaussian ensembles are extremely closed. Based on this observation and the knowledge that for the noise-free case, we conclude that the
construction threshold seems universal only with very low noise variance. Nonetheless, with higher noisy variance, we can see the superiority of the row-orthogonal ensembles to the
i.i.d.~Gaussian ensembles in terms of the following three perspectives. First, the BP threshold $\alpha_{\sf d}$ of the row-orthogonal ensemble is lower than that of the i.i.d.~Gaussian
ensemble. Secondly, for region of $\sigma^2 \gtrsim 0.0013$, there is no sharp phase transition under the i.i.d.~Gaussian ensemble, while the region is extended to $\sigma^2 \gtrsim 0.0025$
under the row-orthogonal ensemble. Finally, the $\alpha_{\sf c}$ transition line under the row-orthogonal ensemble is much lower than that under the i.i.d.~Gaussian ensemble. This implies that
with a proper spatially-coupled matrix $\qA$, the BP algorithm under the row-orthogonal ensemble can achieve the good MSE down to the lower measurement rate in the noisy case.

In Fig.~\ref{fig:MSE}, we plot the MSE achieved by the BP algorithm as a function of the noise variance $\sigma^2$ when the measurement rates are $\alpha_{\sf d}$ in Fig.~\ref{fig:Phase}. As can
be observed, the row-orthogonal ensemble even achieves the lower MSE than the i.i.d.~Gaussian ensemble. This together with the results from Fig.~\ref{fig:Phase} indicates that the row-orthogonal
ensemble not only allows the lower measurement rate but also achieves a lower MSE.

For spatially coupled matrices, Fig.~\ref{fig:MSEVsIteration} compares the evaluation of the MSE using Proposition \ref{Pro_StateEvaluation}. We observe that the evaluation of the MSE with the
spatially coupled orthogonal matrices converges faster than that with the spatially coupled Gaussian matrices. The same phenomenon can also be observed from \cite[Fig.~3]{Barbier-13ArXiv}, in
which a practical AMP algorithm was employed to perform the signal reconstructions.

It is known from \cite{Krzakala-12PRX} that with the spatially-coupled Gaussian matrices, the good MSE can be achieved down to $\alpha_{\sf c}$. Therefore, it is useful to examine whether the
spatially coupled row-orthogonal matrices can achieve the good MSE under the lower measurement rate than the spatially coupled Gaussian matrices. To have such examination, we conduct a numerical
analysis in the following setting: 1) $W = 2$, $\alpha_{\rm seed} = 0.70$, $\alpha_{\rm bulk} = 0.489$, and $J= 2.5$ for Gaussian ensembles; 2) $W = 2$, $\alpha_{\rm seed} = 0.70$, $\alpha_{\rm
bulk} = 0.484$, and $J= 1.5$ for row-orthogonal ensemble. As $L$ increases in both cases, the measurement rate $\alpha$ decreases to $\alpha_{\rm bulk}$. In fact, with more numerical results, we
indeed observe that the good MSE can be achieved to $\alpha \rightarrow \alpha_{\rm bulk} \approx \alpha_{\sf c}$. In addition, we observe that the spatially coupled row-orthogonal matrices
enjoy 1) the faster convergence rate, 2) the lower measurement rate, and 3) the lower MSE result.

Finally, we remark that a way to design a spatially-coupled matrix with row-orthogonal ensemble should be different from that with Gaussian ensemble especially in noisy cases. The development of
the AMP algorithm that corresponds to the evolution of MSE in Proposition \ref{Pro_StateEvaluation} is under way.

\begin{figure}
\begin{center}
\resizebox{3.5in}{!}{%
\includegraphics*{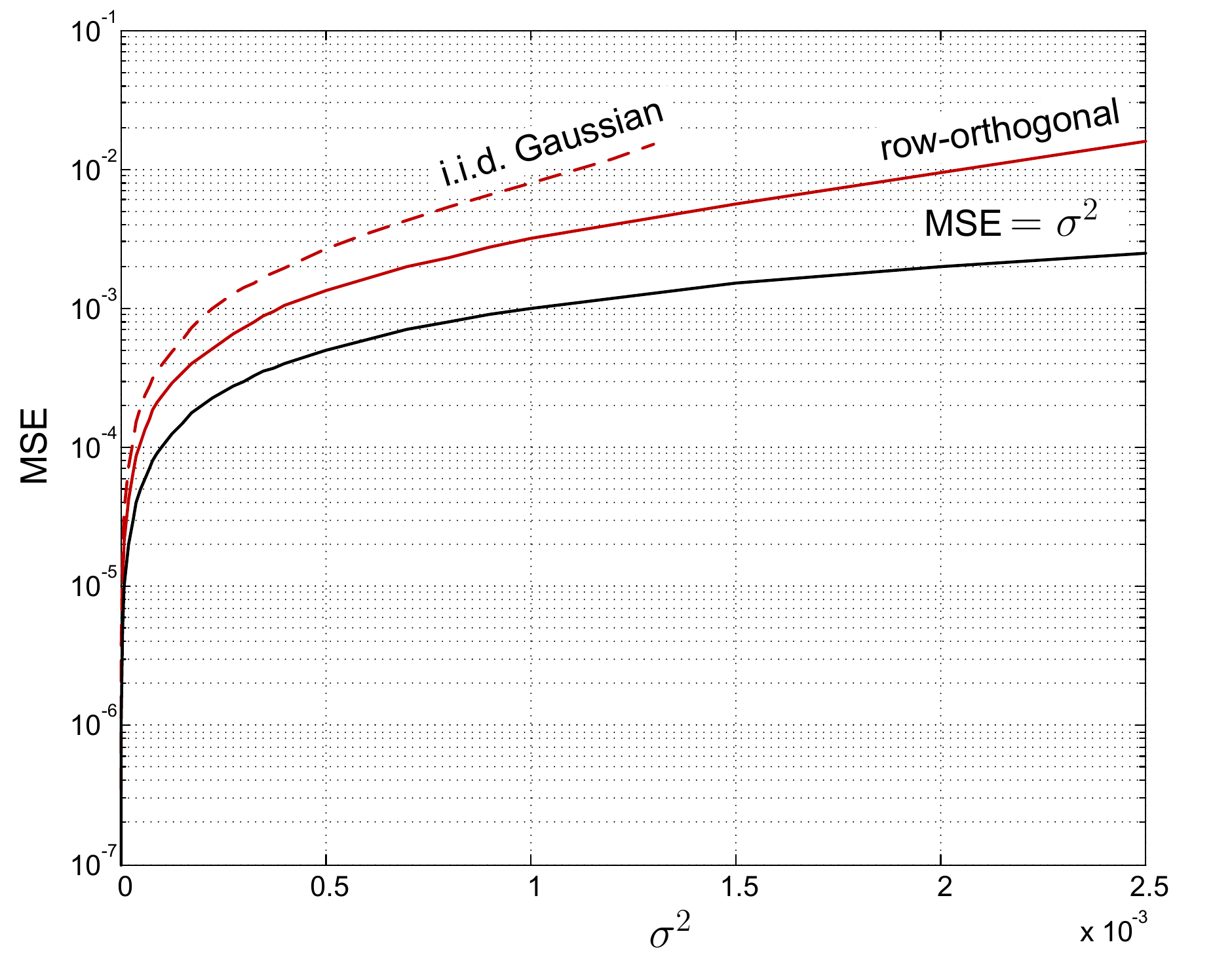} }%
\caption{The MSE achieved by the BP algorithm versus the noise variance $\sigma^2$. The measurement rate is at the BP threshold $\alpha_{\sf d}$. The block solid line (lowermost) marks the noise level which is the re-constructibility limit.}\label{fig:MSE}
\end{center}
\end{figure}

\begin{figure}
\begin{center}
\resizebox{3.5in}{!}{%
\includegraphics*{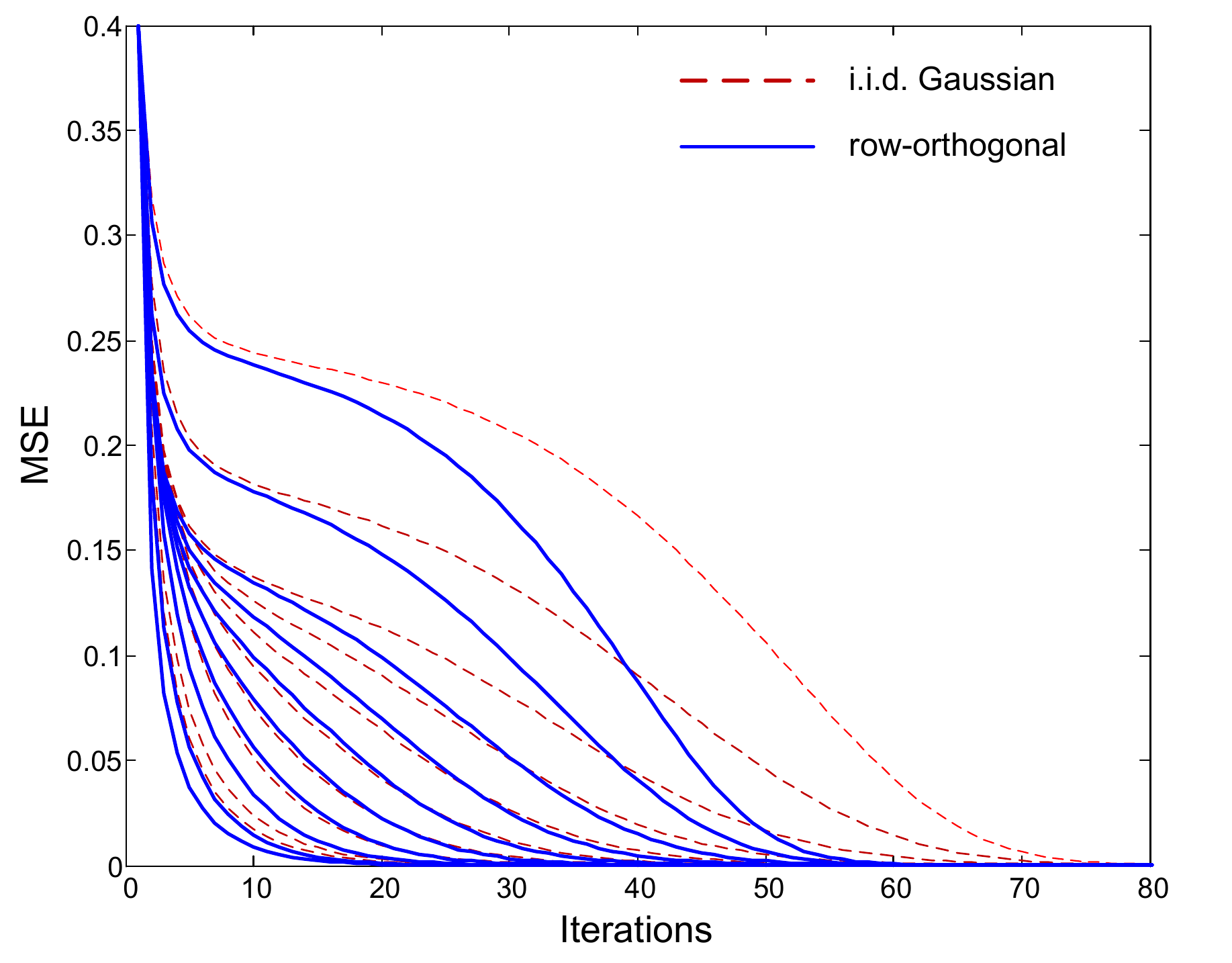} }%
\caption{Evolution of the MSE in each block in the noisy case with $\sigma^2 = 10^{-6}$. We use the seeding matrix with $W = 2$, $L=10$, $\alpha_{\rm seed} = 0.70$, $\alpha_{\rm bulk} = 0.49$, and $J= 0.5$.}\label{fig:MSEVsIteration}
\end{center}
\end{figure}

\section*{V. Conclusion}

We have derived the MSE in the optimal Bayes inference of the sparse signals if the measurement matrix consists of the spatially-coupled orthogonal matrices. The analysis provides a step towards
understanding of the behaviors of the CS with orthogonal matrices. In particular, the numerical results have revealed that the spatially coupled row-orthogonal matrices enjoy the faster
convergence rate, the lower measurement rate, and the lower MSE result. In addition, we remark that a way to design a spatially-coupled matrix with row-orthogonal ensemble should be different
from that with Gaussian ensemble especially in noisy cases. The derived results in this paper can serve as an efficient way to design spatially-coupled orthogonal matrices that have good
performance.

{\renewcommand{\baselinestretch}{1.1}
\begin{footnotesize}
% Generated by IEEEtran.bst, version: 1.13 (2008/09/30)

\end{footnotesize}}


\begin{thebibliography}{10}
\providecommand{\url}[1]{#1} \csname url@samestyle\endcsname \providecommand{\newblock}{\relax} \providecommand{\bibinfo}[2]{#2}
\providecommand{\BIBentrySTDinterwordspacing}{\spaceskip=0pt\relax} \providecommand{\BIBentryALTinterwordstretchfactor}{4}
\providecommand{\BIBentryALTinterwordspacing}{\spaceskip=\fontdimen2\font plus \BIBentryALTinterwordstretchfactor\fontdimen3\font minus
  \fontdimen4\font\relax}
\providecommand{\BIBforeignlanguage}[2]{{%
\expandafter\ifx\csname l@#1\endcsname\relax
\typeout{** WARNING: IEEEtran.bst: No hyphenation pattern has been}%
\typeout{** loaded for the language `#1'. Using the pattern for}%
\typeout{** the default language instead.}%
\else \language=\csname l@#1\endcsname \fi #2}} \providecommand{\BIBdecl}{\relax} \BIBdecl

\bibitem{Candes-05IT}
{E. Cand\`{e}s and T. Tao}, ``{Decoding by linear programming},'' \emph{IEEE
  Trans. Inform. Theory}, vol.~51, no.~12, pp. 4203--4215, Dec. 2005.

\bibitem{Donoho-06IT}
{D. L. Donoho}, ``{Compressed sensing},'' \emph{IEEE Trans. Inf. Theory},
  vol.~52, no.~4, pp. 1289--1306, Apr. 2006.

\bibitem{Candes-06IT}
{E. Cand\`{e}s, J. Romberg, and T. Tao}, ``{Robust uncertainty principles:
  Exact signal reconstruction from highly incomplete frequency information},''
  \emph{IEEE Trans. Inf. Theory}, vol.~52, no.~8, pp. 489--509, Aug. 2006.

\bibitem{Donoho-09JAMS}
{D. Donoho and J. Tanner}, ``{Counting faces of randomly projected polytopes
  when the projection radically lowers dimension},'' \emph{Journal of the
  American Mathematical Society}, vol.~22, no.~1, pp. 1--53, 2009.

\bibitem{Kabashima-09JSM}
{Y. Kabashima, T. Wadayama, and T. Tanaka}, ``{A typical reconstruction limit
  for compressed sensing based on lp-norm minimization},'' \emph{J. Stat.
  Mech.}, no.~9, p. L09003, 2009.

\bibitem{Donoho-09PNAS}
{D. L. Donoho, A. Maleki, and A. Montanari}, ``{Message passing algorithms for
  compressed sensing},'' \emph{Proceedings of the National Academy of
  Sciences}, 2009.

\bibitem{Rangan-10ArXiv}
\BIBentryALTinterwordspacing {S. Rangan}, ``{Generalized approximate message passing for estimation with
  random linear mixing},'' preprint, 2010. [Online]. Available:
  \url{http://arxiv.org/abs/1010.5141v2.}
\BIBentrySTDinterwordspacing

\bibitem{Krzakala-12JSM}
{F. Krzakala, M. M\'{e}zard, F. Sausset, Y. Sun, and L. Zdeborov\'{a}},
  ``{Probabilistic reconstruction in compressed sensing: algorithms, phase
  diagrams, and threshold achieving matrices},'' \emph{J. Stat. Mech.}, vol.
  P08009, 2012.

\bibitem{Krzakala-12PRX}
------, ``{Statistical physics-based reconstruction in compressed sensing},''
  \emph{Phys. Rev. X}, vol. 021005, 2012.

\bibitem{Kudekar-10Allerton}
{S. Kudekar and H. Pfister}, ``{The effect of spatial coupling on compressive
  sensing},'' in \emph{in Communication, Control, and Computing (Allerton)},
  2010, pp. 347--353.

\bibitem{Donoho-12ISIT}
{D. L. Donoho, A. Javanmard, and A. Montanari}, ``{Informationtheoretically
  optimal compressed sensing via spatial coupling and approximate message
  passing},'' in \emph{in Proc. of the IEEE Int. Symposium on Information
  Theory (ISIT)}, 2012.

\bibitem{Barbier-13ArXiv}
\BIBentryALTinterwordspacing {J. Barbier, F. Krzakala, and C. Sch\"{u}lke}, ``{Compressed sensing and
  Approximate Message Passing with spatially-coupled Fourier and Hadamard
  matrices},'' preprint 2013. [Online]. Available:
  \url{http://arxiv.org/abs/1312.1740.}
\BIBentrySTDinterwordspacing

\bibitem{Javanmard-12ISIT}
{A. Javanmard and A. Montanari}, ``{Subsampling at information theoretically
  optimal rates},'' in \emph{in Proc. of the IEEE Int. Symposium on Information
  Theory (ISIT)}, 2012, pp. 2431--2435.

\bibitem{Tulino-13IT}
{A. M. Tulino, G. Caire, S. Verd\'{u}, and S. Shamai}, ``{Support recovery with
  sparsely sampled free random matrices}.''

\bibitem{Kabashima-12JSM}
{Y. Kabashima, M. M. Vehkaper\:{a}, and S. Chatterjee}, ``{Typical l1-recovery
  limit of sparse vectors represented by concatenations of random orthogonal
  matrices},'' \emph{J. Stat. Mech.}, vol. 2012, no.~12, p. P12003, 2012.

\bibitem{Kabashima-14ArXiv}
\BIBentryALTinterwordspacing {Y. Kabashima and M. Vehkapera}, ``{Signal recovery using expectation
  consistent approximation for linear observations},'' preprint, 2014.
  [Online]. Available: \url{http://arxiv.org/abs/1401.5151.}
\BIBentrySTDinterwordspacing

\bibitem{Vehkapera-arXiv13}
\BIBentryALTinterwordspacing {M. Vehkaper\:{a}, Y. Kabashima, and S. Chatterjee}, ``{Analysis of regularized
  LS reconstruction and random matrix ensembles in compressed sensing},''
  preprint 2013. [Online]. Available: \url{http://arxiv.org/abs/1312.0256.}
\BIBentrySTDinterwordspacing

\bibitem{Takeda-06EPL}
{K. Takeda, S. Uda, and Y. Kabashima}, ``{Analysis of CDMA systems that are
  characterized by eigenvalue spectrum},'' \emph{Europhysics Letters}, vol.~76,
  pp. 1193--1199, 2006.

\bibitem{Hatabu-09PRE}
{A. Hatabu, K. Takeda, and Y. Kabashima}, ``{Statistical mechanical analysis of
  the Kronecker channel model for multiple-input multipleoutput wireless
  communication},'' \emph{Phys. Rev. E}, vol.~80, p. 061124(1¡V12), 2009.

\bibitem{Poor-94BOOK}
H.~V. Poor, \emph{{An Introduction to Signal Detection and Estimation}}.\hskip
  1em plus 0.5em minus 0.4em\relax New York: Springer-Verlag, 1994.

\bibitem{Edwards-75JPF}
{S. F. Edwards, and P.W. Anderson}, ``{Theory of spin glasses},'' \emph{J.
  Physics F: Metal Physics}, vol.~5, pp. 965--974, 1975.

\bibitem{Nishimori-01BOOK}
H.~Nishimori, \emph{{Statistical Physics of Spin Glasses and Information
  Processing: An Introduction}}.\hskip 1em plus 0.5em minus 0.4em\relax ser.
  Number 111 in Int. Series on Monographs on Physics. Oxford U.K.: Oxford Univ.
  Press, 2001.

\end{thebibliography}
\end{document}